\documentclass{quantumdust}

\usepackage[T1]{fontenc}
\usepackage{amssymb}
\usepackage{mathrsfs}
\usepackage{stmaryrd}
\usepackage{tikz-cd}
\usepackage{booktabs}
\usepackage{pgfplots}
\pgfplotsset{compat=1.18}
\pgfplotsset{qd/.style={
  tick label style={font=\scriptsize}, label style={font=\footnotesize},
  title style={font=\footnotesize}, legend cell align=left,
  legend style={font=\scriptsize, draw=black!60, fill=white, fill opacity=0.85,
                text opacity=1, inner sep=1pt},
  grid=both, grid style={black!10}, axis line style={black!80},
  every axis plot/.append style={line width=0.7pt}}}

\makeatletter
\def\endfront@text{}
\def\enddoc@text{}
\def\@setamsclass{}
\def\@setkeywords{}
\def\@settitlecomment{}

\makeatother

\newtheorem{theorem}{Theorem}[section]
\newtheorem{lemma}[theorem]{Lemma}
\newtheorem{proposition}[theorem]{Proposition}
\newtheorem{corollary}[theorem]{Corollary}
\theoremstyle{definition}

\theoremstyle{remark}

\begin{document}

\title[Quantum Dust from the Curse of Dimensionality]{Quantum Dust from the Curse of Dimensionality}
\author{Kenan Oggad}
\address{Universit\'{e} Paris-Saclay}
\email{kenan.oggad@universite-paris-saclay.fr}
\keywords{spectral dimension, concentration of measure, Fubini--Study geometry,
  quantum gravity, heat kernel, regular variation}
\subjclass[2020]{81P45, 83C45, 58J35, 60B20}

\begin{abstract}
Why do unrelated approaches to quantum gravity nearly all find spacetime two-dimensional at the shortest scales? Each answers with its own dynamics; we exhibit a single kinematic route to the same two, assuming no field equation. Concentration of measure on the Fubini--Study geometry of pure states forces the pairwise distances of a random sample to equalize as the dimension grows, their relative spread vanishing as the inverse square root of the dimension, so any finite sample of states collapses to an equidistant dust whose thresholded metric graph is the complete graph. A diffusion probe reads this dust as two-dimensional in the large-sample limit, the value the running spectral dimension takes at the dust's single relaxation scale, a statement about the properties of the measurement process rather than the underlying structure. What a structure reads is governed by its Laplacian spectrum near zero, the condition we call spectral faithfulness; a structure with a single relaxation scale encodes no spectral dimension that distinguishes it from another, so across such single-scale structures the spread of values at a fixed probe time is an artifact. This convergence on two is therefore not, by itself, evidence that spacetime is two-dimensional. This collapse of an actual sample, the probe value, and the eigenvalue-density criterion that fixes the three spectral classes are machine-checked in Lean~4 against Mathlib, the concentration resting on the standard Beta law of overlaps; a power-law tail of small eigenvalues reads the structure's genuine dimension, a single scale above a gap reads two at its own clock, and a gapped two-scale band reads off the universal line. These classes are exercised numerically on graph-Laplacian proxies, and whether a link-graph reading carries to the physical nonlocal operator is left open. The result is a null model for ultraviolet dimensional reduction, in which a large enough state space lets the metric degenerate by concentration alone and a diffusion on it return two with no dynamics. The contribution is the spectral test, the eigenvalue density near zero, that on a given structure separates a measurement artifact from a dimension the structure genuinely carries.
\end{abstract}

\maketitle

\section{Introduction}

Across approaches to quantum gravity that share no dynamics, the spectral dimension of spacetime falls to two at the shortest scales, measured in causal dynamical triangulations~\cite{Ambjorn2005}, predicted by the asymptotic-safety fixed point~\cite{LauscherReuter2005}, and found again in Ho\v{r}ava--Lifshitz gravity~\cite{Horava2009SpectralDim}, with the causal set the exception whose short-scale spectral dimension is operator-dependent, rising under the graph operator~\cite{EichhornMizera2014} and falling toward two only under the nonlocal d'Alembertian~\cite{BelenchiaEtAl2016}; surveying the phenomenon, Carlip~\cite{Carlip2017} states the question and leaves it open, and whether this dimensional reduction is generically connected to spacetime foam is not known~\cite{Carlip2023}. A number computed across unrelated theories by unrelated mechanisms asks for an explanation belonging to none of them.

We give one, and it is kinematic. It needs no Hamiltonian, action, or field equation; it is a property of the geometry of the space of states alone. Concentration of measure on complex projective space forces the Fubini--Study distances between random states to equalize as the dimension grows, until no metric distinguishes one state from another and a finite sample is an equidistant dust, totally disconnected as a metric space though complete as a graph---the curse of dimensionality turned geometric, the vastness that defeats sampling in high dimensions dissolving distance itself.

Suppose we are handed this dust, in which distance and dimension are not given in advance, and ask what a probe reads off it. A diffusion probe, the random walk whose return probability defines the spectral dimension, has nothing left to measure once distances are equal; it measures its own wandering. The two it returns is the value $D_S(t)=2t$ read at the relaxation clock $t=1$, a signature of the measurement and not a dimension of the dust, the same two that fixes a Brownian path at Hausdorff dimension two. What a dynamical program measures at the shortest scales may be its probe rather than its world.

What a probe reads is fixed by where the spectrum sits, and spectral faithfulness, the preservation of that spectrum across an emergence map, is the condition for the reading to carry over. A structure carrying a single relaxation scale encodes no spectral dimension that tells it from another, returning the same two at its own relaxation time whatever the scale. What the bare agreement on two cannot then settle, the shape of the spectrum near zero can---whether a given program's two is this clock artifact or a dimension its spectrum genuinely carries is decided structure by structure by the spectral test of Section~\ref{sec:sort}, which on the three structures we examine finds none to be the bare artifact, two carrying a genuine dimension and one a gapped band.

The concentration is not new; L\'evy's lemma is classical. New here are its reading as a geometric collapse, the exact law that governs the collapse, and the recognition that the dimension a probe then reads is the signature of the diffusion. The result lives on the space of states; whether that space is spacetime is the question we state and leave open. The numerical sort that places the programs runs on graph Laplacians as proxies, not on the nonlocal operators from which their spectral dimensions are physically computed; for the causal set in particular, whether the link-graph Laplacian and the Benincasa--Dowker d'Alembertian share the reading is open.

\section{The concentration spine}

Sample $m$ states at random from a compact metric-measure space and measure their pairwise distances. The quantity that governs everything is the squared coefficient of variation of those distances, which can be expressed as $\mathrm{CV}^2 = \operatorname{Var}[d]/\mathbb{E}[d]^2$, the relative width of the distance distribution. A family of state spaces \emph{concentrates} when $\mathrm{CV}^2 \to 0$ as the dimension grows; every pair sits at the same distance, and the width around that common value collapses. This collapse carries geometric signal, the curse of dimensionality---elsewhere a frustration of sampling---here fixing the geometry itself.

On the Fubini--Study geometry of pure states the width obeys an exact law.

\begin{proposition}\label{prop:cv2}
For independent Haar-random pure states $\psi,\phi$ on $\mathbb{CP}^n$, the Fubini--Study distance $d=\arccos|\langle\psi|\phi\rangle|$ satisfies
\begin{equation}\label{eq:cv2}
\mathrm{CV}^2(n)=\frac{\operatorname{Var}[d]}{\mathbb{E}[d]^2}=\frac{4-\pi}{\pi^2\,n}+O(n^{-3/2}).
\end{equation}
\end{proposition}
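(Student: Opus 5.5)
The plan is to reduce everything to the one-dimensional law of the overlap. For Haar-random $\psi,\phi$ on $\mathbb{CP}^n$, i.e.\ unit vectors in $\mathbb{C}^{n+1}$, unitary invariance lets us fix $\psi=e_1$. Then $X=|\langle\psi|\phi\rangle|^2=|\phi_1|^2$ is the first coordinate of a uniform point on the simplex. This is the standard Beta law of overlaps, $X\sim\mathrm{Beta}(1,n)$, with density $n(1-x)^{n-1}$ on $[0,1]$, and we take it as given. Write $s=\sqrt{X}$. Since $\arccos s=\tfrac{\pi}{2}-\arcsin s$, we have
\begin{equation*}
d=\tfrac{\pi}{2}-\arcsin s,\qquad \operatorname{Var}[d]=\operatorname{Var}[\arcsin s],\qquad \mathbb{E}[d]=\tfrac{\pi}{2}-\mathbb{E}[\arcsin s].
\end{equation*}
The whole computation is then a small-$s$ expansion, because $s$ is typically of order $n^{-1/2}$.

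The first step computes the moments of $s$ exactly from the Beta law. We have $\mathbb{E}[s^{k}]=\Gamma(1+k/2)\,\Gamma(n+1)/\Gamma(n+1+k/2)$. This gives $\mathbb{E}[s^2]=1/(n+1)$, $\mathbb{E}[s]=\tfrac{\sqrt{\pi}}{2}\,\Gamma(n+1)/\Gamma(n+3/2)$, and $\mathbb{E}[s^k]=O(n^{-k/2})$ in general. By the standard Gamma-ratio asymptotic, $\Gamma(n+1)/\Gamma(n+3/2)=n^{-1/2}(1+O(n^{-1}))$. Hence $\mathbb{E}[s]^2=\tfrac{\pi}{4n}(1+O(n^{-1}))$, and therefore
\begin{equation*}
\operatorname{Var}[s]=\frac{1-\pi/4}{n}+O(n^{-2}).
\end{equation*}

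The second step passes from $s$ to $\arcsin s$. Write $\arcsin s=s+R(s)$ with $0\le R(s)\le C s^3$ on $[0,1]$. The constant $C=\pi/2-1$ works, since $(\arcsin s-s)/s^3$ is increasing on $[0,1]$. Then $\mathbb{E}[\arcsin s]=\mathbb{E}[s]+O(n^{-3/2})$. For the variance, $\operatorname{Var}[s+R]=\operatorname{Var}[s]+2\operatorname{Cov}(s,R)+\operatorname{Var}[R]$. By Cauchy--Schwarz with the moment bounds, $|\operatorname{Cov}(s,R)|\le\sqrt{\operatorname{Var}[s]\,\mathbb{E}[R^2]}=O(n^{-1/2}\cdot n^{-3/2})=O(n^{-2})$, and $\operatorname{Var}[R]\le\mathbb{E}[R^2]=O(n^{-3})$. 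So
\begin{equation*}
\operatorname{Var}[d]=\frac{4-\pi}{4n}+O(n^{-2}),\qquad \mathbb{E}[d]=\frac{\pi}{2}-\frac{\sqrt{\pi}}{2}\,n^{-1/2}+O(n^{-3/2}).
\end{equation*}

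Finally we divide. We have $\mathbb{E}[d]^{-2}=\tfrac{4}{\pi^2}\bigl(1+O(n^{-1/2})\bigr)$, so
\begin{equation*}
\mathrm{CV}^2=\frac{4-\pi}{4n}\cdot\frac{4}{\pi^2}\bigl(1+O(n^{-1/2})\bigr)+O(n^{-2})=\frac{4-\pi}{\pi^2 n}+O(n^{-3/2}).
\end{equation*}
The remainder $O(n^{-3/2})$ comes from the first correction to the mean. It is sharp in order, because $\mathbb{E}[d]$ really does carry an $n^{-1/2}$ term.

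The step needing the most care is the variance. The leading term $(1-\pi/4)/n$ arises as a difference of two quantities, each of order $1/n$. Their $1/n$ coefficients must therefore be exact, which requires the Gamma-ratio asymptotic to relative accuracy $O(n^{-1})$ and not merely $o(1)$. One must also check that the $\arcsin$ nonlinearity only enters at order $n^{-2}$. The Cauchy--Schwarz bound in the second step handles the nonlinearity, so the remaining risk is the precise form of the Gamma-ratio expansion.
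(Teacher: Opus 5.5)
Your proposal is correct and follows the paper's proof essentially step for step: the $\mathrm{Beta}(1,n)$ law of the squared overlap, the exact Gamma-ratio moments of $\sqrt u$, the split $d=\tfrac{\pi}{2}-\arcsin\sqrt u$ with the nonlinearity entering the variance only at $O(n^{-2})$, and division by $\mathbb{E}[d]^2=\tfrac{\pi^2}{4}\bigl(1+O(n^{-1/2})\bigr)$. Your remainder bound $0\le\arcsin s-s\le(\tfrac{\pi}{2}-1)s^3$ holds on all of $[0,1]$, and together with Cauchy--Schwarz it is a slightly cleaner justification than the paper's local Taylor expansion, which the paper justifies ``because $u$ concentrates at $0$''; your version controls the remainder without appealing to concentration.
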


\begin{proof}
The squared overlap $u=|\langle\psi|\phi\rangle|^2$ is $\mathrm{Beta}(1,n)$-distributed, with density $n(1-u)^{n-1}$ on $[0,1]$ and fractional moments
\[
\mathbb{E}[u^s]=\frac{\Gamma(1+s)\,\Gamma(n+1)}{\Gamma(n+1+s)},\qquad s>-1 .
\]
In particular $\mathbb{E}[u]=1/(n+1)$ and $\mathbb{E}[\sqrt u]=\tfrac{\sqrt\pi}{2}\,\Gamma(n+1)/\Gamma(n+\tfrac32)=\tfrac{\sqrt\pi}{2}\,n^{-1/2}\big(1+O(n^{-1})\big)$. Write $d=\arccos\sqrt u=\tfrac{\pi}{2}-\arcsin\sqrt u$ and expand $\arcsin\sqrt u=\sqrt u+\tfrac16 u^{3/2}+O(u^{5/2})$, legitimate because $u$ concentrates at $0$. The cubic term is controlled by $\mathbb{E}[u^{3/2}]=\Gamma(\tfrac52)\,\Gamma(n+1)/\Gamma(n+\tfrac52)=O(n^{-3/2})$, and its effect on the variance by $\operatorname{Cov}(\sqrt u,\tfrac16 u^{3/2})=O(n^{-2})$ (as $\mathbb{E}[u^{2}]=2/[(n+1)(n+2)]=O(n^{-2})$), so
\[
\mathbb{E}[d]=\frac{\pi}{2}-\mathbb{E}[\sqrt u]+O(n^{-3/2})=\frac{\pi}{2}+O(n^{-1/2}),
\]
\[
\operatorname{Var}[d]=\operatorname{Var}[\sqrt u]+O(n^{-2})=\mathbb{E}[u]-\mathbb{E}[\sqrt u]^2+O(n^{-2})=\frac{4-\pi}{4n}+O(n^{-2}),
\]
the last equality from $\mathbb{E}[u]=\tfrac1{n+1}$ and $\mathbb{E}[\sqrt u]^2=\tfrac{\pi}{4}\big(\Gamma(n+1)/\Gamma(n+\tfrac32)\big)^2=\tfrac{\pi}{4n}+O(n^{-2})$. Dividing, and using $\mathbb{E}[d]^{-2}=(\pi/2)^{-2}\big(1+O(n^{-1/2})\big)$,
\[
\mathrm{CV}^2(n)=\frac{(4-\pi)/(4n)}{(\pi/2)^2}\big(1+O(n^{-1/2})\big)=\frac{4-\pi}{\pi^2\,n}+O(n^{-3/2}). \qedhere
\]
\end{proof}

The leading constant $(4-\pi)/\pi^2\approx0.087$ is the sharp value of the $1/n$ concentration that L\'evy's lemma and its refinements bound from above~\cite{Levy1951, Ledoux2001}. Direct Monte Carlo confirms it; the rescaled width $n\,\mathrm{CV}^2$ falls from $0.16$ at $n=2$ to $0.090$ at $n=10^{3}$, descending on $(4-\pi)/\pi^2=0.08697$.

Concentration has a combinatorial endpoint. Once the distances have equalized to within the vanishing width, the metric graph on the sample is, with high probability, the complete graph $K_m$ carrying a single common edge weight---an equidistant, totally disconnected point set in which no metric distinguishes one state from another. The disconnection is metric and profinite, equidistant points carrying no nontrivial geometry, not a statement about $K_m$, which is maximally connected as a graph. We call this object the \emph{quantum dust}.

\begin{theorem}\label{thm:dust}
Let $d_1,\dots,d_N$ be the $N=\binom m2$ pairwise distances of a sample of size $m$, with empirical mean $\mu=\tfrac1N\sum_i d_i$, variance $\sigma^2=\tfrac1N\sum_i(d_i-\mu)^2$, and empirical $\widehat{\mathrm{CV}}^2=\sigma^2/\mu^2$. For every threshold $\delta>0$,
\begin{equation}\label{eq:dust}
\#\{\,i:(d_i-\mu)^2\ge(\delta\mu)^2\,\}\,(\delta\mu)^2\ \le\ N\,\sigma^2 .
\end{equation}
Equivalently the fraction of distances with $|d_i-\mu|\ge\delta\mu$ is at most $\widehat{\mathrm{CV}}^2/\delta^2$, so $\widehat{\mathrm{CV}}^2\to0$ drives that fraction to zero and the thresholded graph on the sample is the complete graph $K_m$.
\end{theorem}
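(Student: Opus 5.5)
This is an empirical Chebyshev (Markov) inequality applied to the finite list $d_1,\dots,d_N$, and the plan is to prove it by a one-line counting argument with no probabilistic input at all. Write $\tau=(\delta\mu)^2$ and let $S=\{i:(d_i-\mu)^2\ge\tau\}$. Every summand of $N\sigma^2=\sum_{i=1}^N(d_i-\mu)^2$ is nonnegative. So the full sum dominates the sum over $S$, and on $S$ each term is at least $\tau$:
\[
N\sigma^2=\sum_{i=1}^N(d_i-\mu)^2\ \ge\ \sum_{i\in S}(d_i-\mu)^2\ \ge\ \#S\cdot\tau ,
\]
which is exactly \eqref{eq:dust}.

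For the equivalent fractional form, I will first note that $\mu>0$ whenever the sample has two distinct states. Distances are nonnegative and not all zero, so $\mu>0$; the degenerate case $\mu=0$ must be excluded or handled separately, since then $\widehat{\mathrm{CV}}^2$ is undefined. With $\mu>0$, divide \eqref{eq:dust} by $N(\delta\mu)^2>0$ to get $\#S/N\le\sigma^2/(\delta^2\mu^2)=\widehat{\mathrm{CV}}^2/\delta^2$. Since $|d_i-\mu|\ge\delta\mu$ iff $(d_i-\mu)^2\ge(\delta\mu)^2$ (both sides are nonnegative), $S$ is precisely the set of distances deviating by at least $\delta\mu$.

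The last clause is an interpretation, and I will make it precise rather than literal. Define the thresholded graph on the $m$ sample points by joining $i,j$ when $|d_{ij}-\mu|<\delta\mu$, i.e.\ when the pair sits at the common distance within relative tolerance $\delta$. The bound says at most a fraction $\widehat{\mathrm{CV}}^2/\delta^2$ of the $N$ edges of $K_m$ are missing. In particular, if $\widehat{\mathrm{CV}}^2<\delta^2/N$, then $\#S<1$, so $S=\emptyset$ and the graph is exactly $K_m$.

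That step is the only real subtlety. ``Fraction tends to zero'' yields an asymptotically complete graph for fixed $m$ only once the fraction drops below $1/N=1/\binom m2$, so I will state it in that integer-rounding form. To connect back to Proposition~\ref{prop:cv2}, I will remark that for fixed $m$ and $n\to\infty$, $\mathbb{E}[\sigma^2]\le\operatorname{Var}[d]$ while $\mu\to\pi/2$ in probability. Hence $\widehat{\mathrm{CV}}^2=O_P(1/n)$, and the threshold $\delta^2/N$ is crossed with probability tending to one, giving $K_m$ with high probability.
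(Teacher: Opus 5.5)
Your proof is correct and is the paper's own argument. The nonnegative squared deviations summed over the deviating set are at least $\#S\,(\delta\mu)^2$ and at most $N\sigma^2$. Dividing by $N(\delta\mu)^2$ gives the fractional bound, and an empty deviating set keeps every edge of $K_m$; your two-sided threshold yields a subgraph of the paper's $d_{ij}\le(1+\delta)\mu$ graph, so reaching $K_m$ with it is if anything stronger.

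Your explicit rounding $\widehat{\mathrm{CV}}^2<\delta^2/N\Rightarrow S=\emptyset$ makes precise the step the paper phrases as ``once no distance deviates.'' Your closing remark via $\mathbb{E}[\sigma^2]\le\operatorname{Var}[d]$ is a valid second-moment substitute for the paper's Lemma~\ref{lem:empirical}, which instead obtains an exponential rate from the exact Beta tail and a union bound.
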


\begin{proof}
Every index $i$ with $(d_i-\mu)^2\ge(\delta\mu)^2$ contributes at least $(\delta\mu)^2$ to the total squared deviation, and a sum of the nonnegative terms $(d_i-\mu)^2$ over any subset is at most the sum over all of them, so
\[
\#\{\,i:(d_i-\mu)^2\ge(\delta\mu)^2\,\}\,(\delta\mu)^2\ \le\ \sum_{i=1}^N(d_i-\mu)^2\ =\ N\,\sigma^2 .
\]
Dividing by $N(\delta\mu)^2$ bounds the deviating fraction by $\sigma^2/(\delta\mu)^2=\widehat{\mathrm{CV}}^2/\delta^2$. As $\widehat{\mathrm{CV}}^2\to0$ this tends to zero; once no distance deviates from $\mu$ by more than $\delta\mu$, joining every pair at distance at most $(1+\delta)\mu$ keeps all $\binom m2$ edges and produces $K_m$.
\end{proof}

Theorem~\ref{thm:dust} is deterministic, a bound on a given list. What carries the physical content is that the $\widehat{\mathrm{CV}}^2$ of an \emph{actual} sample vanishes, and that its $N=\binom m2$ pairwise distances are dependent (the pairs share vertices). The dependence is harmless: with $m$ fixed, the marginal concentration of each distance and a union bound over the fixed number of pairs suffice, with no independence and no central-limit theorem over a growing $N$.

\begin{lemma}[Empirical collapse]\label{lem:empirical}
Fix $m\ge2$ and let $\psi_1,\dots,\psi_m$ be independent Haar-random states on $\mathbb{CP}^n$, with pairwise distances $d_{ij}=\arccos|\langle\psi_i|\psi_j\rangle|$ over the $N=\binom m2$ pairs. For every $\varepsilon\in(0,\pi/2)$,
\begin{equation}\label{eq:empirical}
\mathbb{P}\Big(\max_{i<j}\big|d_{ij}-\tfrac\pi2\big|>\varepsilon\Big)\ \le\ \binom m2\,\cos^{2n}\varepsilon .
\end{equation}
Hence for fixed $m$, as $n\to\infty$, the empirical mean $\hat\mu\to\pi/2$, the empirical variance $\hat\sigma^2\to0$, and the empirical $\widehat{\mathrm{CV}}^2=\hat\sigma^2/\hat\mu^2\to0$, all in probability; and for every $\delta>0$ the thresholded graph joining pairs within $(1+\delta)\hat\mu$ equals $K_m$ with probability at least $1-\binom m2\cos^{2n}\varepsilon_\delta$, where $\varepsilon_\delta=(\pi/2)\,\delta/(1+\delta)$.
\end{lemma}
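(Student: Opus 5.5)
The plan has three steps: bound one pair exactly, take a union bound over the $\binom m2$ pairs, and then push the resulting uniform control through to the empirical statistics and to Theorem~\ref{thm:dust}.

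\textbf{Single pair.} We use the Beta law from the proof of Proposition~\ref{prop:cv2}: $u=|\langle\psi_i|\psi_j\rangle|^2\sim\mathrm{Beta}(1,n)$. Since $d_{ij}=\arccos\sqrt u\le\pi/2$ always, the deviation $|d_{ij}-\pi/2|$ can exceed $\varepsilon$ only when $d_{ij}<\pi/2-\varepsilon$. That is equivalent to $\sqrt u>\cos(\pi/2-\varepsilon)=\sin\varepsilon$, i.e.\ $u>\sin^2\varepsilon$. The Beta$(1,n)$ tail is explicit:
\[
\mathbb{P}(u>x)=(1-x)^n ,
\]
so $\mathbb{P}(|d_{ij}-\pi/2|>\varepsilon)=(1-\sin^2\varepsilon)^n=\cos^{2n}\varepsilon$ exactly. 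Each pair $(\psi_i,\psi_j)$ is a pair of independent Haar states, so this marginal applies to every pair.

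\textbf{Union bound.} Summing the marginal over the $\binom m2$ pairs gives \eqref{eq:empirical}. No independence between pairs is needed.

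\textbf{Empirical statistics and $K_m$.} Because $m$ is fixed and $\cos\varepsilon<1$, the right side of \eqref{eq:empirical} tends to $0$. On the complement event $E_\varepsilon$, every $d_{ij}$ lies in $[\pi/2-\varepsilon,\pi/2]$. Hence $|\hat\mu-\pi/2|\le\varepsilon$, since an average stays within the range of its terms. Also $\hat\sigma^2\le\varepsilon^2$, since every term satisfies $|d_{ij}-\hat\mu|\le\varepsilon$. It follows that $\widehat{\mathrm{CV}}^2\le\varepsilon^2/(\pi/2-\varepsilon)^2$. Letting $\varepsilon\downarrow0$ gives all three limits in probability.

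For the graph statement, work on $E_{\varepsilon_\delta}$. Every distance satisfies $d_{ij}\le\pi/2$, and $\hat\mu\ge\pi/2-\varepsilon_\delta$. With $\varepsilon_\delta=(\pi/2)\delta/(1+\delta)$ we have $\pi/2-\varepsilon_\delta=(\pi/2)/(1+\delta)$, so
\[
(1+\delta)\hat\mu\ \ge\ (1+\delta)(\pi/2-\varepsilon_\delta)=\pi/2\ \ge\ d_{ij}.
\]
Thus every pair is joined and the graph is $K_m$, with probability at least $1-\binom m2\cos^{2n}\varepsilon_\delta$. One small point: we need $\varepsilon_\delta<\pi/2$, which holds for every $\delta>0$. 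This is exactly where the one-sidedness $d\le\pi/2$ is used.

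\textbf{Main obstacle.} This is bookkeeping rather than depth. The care points are:
\begin{itemize}
\item getting the exact tail $\cos^{2n}\varepsilon$ by noticing that only the lower deviation is possible;
\item verifying that the chosen $\varepsilon_\delta$ makes the threshold inequality close exactly.
\end{itemize}
The Beta$(1,n)$ law itself is taken as given, as in Proposition~\ref{prop:cv2}.
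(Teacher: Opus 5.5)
Your proposal is correct and follows the paper's proof essentially step for step. It uses the exact single-pair tail $\mathbb{P}(u_{ij}>\sin^2\varepsilon)=\cos^{2n}\varepsilon$ from the $\mathrm{Beta}(1,n)$ law (via $d_{ij}\le\pi/2$), Boole's inequality over the fixed number of pairs, and the deterministic bounds on $\hat\mu$, $\hat\sigma^2$, $\widehat{\mathrm{CV}}^2$ on the complementary event, and your explicit identity $(1+\delta)(\tfrac\pi2-\varepsilon_\delta)=\tfrac\pi2$ just spells out the threshold step the paper leaves implicit.
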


\begin{proof}
With $u_{ij}=|\langle\psi_i|\psi_j\rangle|^2\sim\mathrm{Beta}(1,n)$ one has $\mathbb{P}(u_{ij}>x)=(1-x)^n$, and since $\tfrac\pi2-d_{ij}=\arcsin\sqrt{u_{ij}}$ the event $|d_{ij}-\tfrac\pi2|>\varepsilon$ is exactly $u_{ij}>\sin^2\varepsilon$, of probability $(1-\sin^2\varepsilon)^n=\cos^{2n}\varepsilon$. Boole's inequality over the $\binom m2$ identically distributed, not independent, pairs gives~\eqref{eq:empirical}; the union bound needs no independence because the number of pairs is fixed. On the complementary event every $d_{ij}\in[\tfrac\pi2-\varepsilon,\tfrac\pi2]$, so $\hat\mu\in[\tfrac\pi2-\varepsilon,\tfrac\pi2]$, each $|d_{ij}-\hat\mu|\le\varepsilon$, hence $\hat\sigma^2\le\varepsilon^2$ and $\widehat{\mathrm{CV}}^2\le\varepsilon^2/(\tfrac\pi2-\varepsilon)^2$; and for $\varepsilon\le\varepsilon_\delta$ every distance is at most $(1+\delta)\hat\mu$, so all $\binom m2$ edges are kept.
\end{proof}

The physical ensembles concentrate. Pure states on $\mathbb{CP}^n$ give $\mathrm{CV}^2 \sim (4-\pi)/(\pi^2 n)$ by Proposition~\ref{prop:cv2}; generic pure states concentrate at the same order in the Hilbert-space dimension~\cite{HaydenLeungWinter2006}, so dynamical ensembles dissolve at least as readily as kinematic ones. The Bekenstein bound fixes how large a Hilbert space a bounded region carries; a horizon of area $A$ admits $\dim\mathcal{H}\le e^{A/4}$, an \emph{upper} bound, so the bound alone does not force concentration. Only at saturation, $\dim\mathcal{H}=e^{A/4}$, does the spread fall exponentially in the area. Saturation holds for black holes but not for generic matter~\cite{Casini2008, Bousso2002, StromingerVafa1996}; we carry it as a named physical hypothesis, outside the formal core.

\section{The probe}

To read a dimension off the dust we run a diffusion on it. The heat trace of the graph Laplacian, $P(t)=\tfrac1m\sum_i e^{-t\lambda_i}$, can be thought of as the return probability of a continuous-time random walk, and the running spectral dimension
\begin{equation}\label{eq:ds}
D_S(t)=-2t\,\frac{P'(t)}{P(t)}
\end{equation}
is what that walk reports at scale $t$. It is an apparatus, not a property; the value it returns depends on the walk as much as on the space. On the dust the limiting running dimension is a line, $D_S(t)\to2t$, crossing the value two at the relaxation scale $t=1$ rather than holding a plateau there, the slope-two crossing the separation criterion later tells from a slope-zero plateau.

\begin{theorem}\label{thm:two}
The normalized Laplacian of $K_m$ has eigenvalue $0$ on the constant vectors and $\tfrac{m}{m-1}$ on their orthogonal complement, with multiplicity $m-1$,
\begin{equation}\label{eq:specKm}
\operatorname{spec}\mathcal L(K_m)=\{0\}\,\cup\,\Big\{\tfrac{m}{m-1}\Big\}^{(m-1)},
\end{equation}
its heat trace and the derivative thereof are
\begin{equation}\label{eq:Pm}
P_m(t)=\tfrac1m\big(1+(m-1)\,e^{-tm/(m-1)}\big),\qquad P_m'(t)=-\,e^{-tm/(m-1)},
\end{equation}
and the running spectral dimension $D_S(t)=-2t\,P_m'(t)/P_m(t)$ satisfies $\lim_{m\to\infty}D_S(K_m,1)=2$.
\end{theorem}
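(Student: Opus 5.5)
The plan is to compute everything explicitly from the structure of $K_m$; no earlier result is needed beyond the definition \eqref{eq:ds}. For the spectrum, write the normalized Laplacian $\mathcal L = I - D^{-1/2}AD^{-1/2}$. Every vertex of $K_m$ has degree $m-1$, so $D=(m-1)I$ and $A=J-I$, where $J$ is the all-ones matrix. This gives
\[
\mathcal L = I-\tfrac{1}{m-1}(J-I)=\tfrac{m}{m-1}I-\tfrac{1}{m-1}J .
\]
$J$ has eigenvalue $m$ on the constant vector and $0$ on its $(m-1)$-dimensional orthogonal complement. Hence $\mathcal L$ has eigenvalue $\tfrac{m}{m-1}-\tfrac{m}{m-1}=0$ on the constants and $\tfrac{m}{m-1}$ with multiplicity $m-1$ on their complement. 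This is \eqref{eq:specKm}.

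For the heat trace, substitute the spectrum into $P(t)=\tfrac1m\sum_i e^{-t\lambda_i}$ to get $P_m(t)=\tfrac1m\bigl(1+(m-1)e^{-tm/(m-1)}\bigr)$. Differentiating,
\[
P_m'(t)=\tfrac1m(m-1)\cdot\bigl(-\tfrac{m}{m-1}\bigr)e^{-tm/(m-1)}=-e^{-tm/(m-1)} ,
\]
which is \eqref{eq:Pm}. Inserting both into \eqref{eq:ds} gives the closed form
\[
D_S(K_m,t)=\frac{2tm\,e^{-tm/(m-1)}}{1+(m-1)e^{-tm/(m-1)}} .
\]

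For the limit, set $t=1$ and divide numerator and denominator by $m$:
\[
D_S(K_m,1)=\frac{2e^{-m/(m-1)}}{\tfrac1m+\tfrac{m-1}{m}e^{-m/(m-1)}} .
\]
As $m\to\infty$, $e^{-m/(m-1)}\to e^{-1}$, $\tfrac1m\to0$ and $\tfrac{m-1}{m}\to1$. The ratio therefore tends to $2e^{-1}/e^{-1}=2$. The same division at general $t$ gives $D_S(K_m,t)\to 2te^{-t}/e^{-t}=2t$, recovering the line claimed before the theorem.

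None of these steps is a real obstacle. The one point needing care is the normalization convention: the eigenvalue $\tfrac{m}{m-1}$ relies on the symmetric normalized Laplacian, or equivalently the random-walk Laplacian $I-D^{-1}A$, which has the same spectrum. With the combinatorial Laplacian the nonzero eigenvalue would be $m$ and the clock would shift. I would state the convention explicitly at the outset. The limit itself is elementary, since the $1/m$ term from the zero mode is the only correction and it vanishes.
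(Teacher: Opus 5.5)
Your proposal is correct and follows essentially the paper's route: substitute the spectrum into the heat trace, differentiate, and take the $m\to\infty$ limit of $-2tP'/P$. Two small differences strengthen it. You derive the spectrum from $\mathcal L=\tfrac{m}{m-1}I-\tfrac{1}{m-1}J$, which the paper's proof takes as given. You also evaluate the closed form at $t=1$ before letting $m\to\infty$, so the paper's remark about uniform convergence on compact $t$-intervals is not needed.
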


\begin{proof}
With eigenvalue $0$ once and $a_m:=m/(m-1)$ $(m-1)$ times, $P(t)=\tfrac1m\big(1+(m-1)e^{-t a_m}\big)$. As $m\to\infty$, $a_m\to1$ and $(m-1)/m\to1$, so $P(t)\to e^{-t}$ and $P'(t)\to-e^{-t}$, uniformly on compact $t$-intervals, so the limit commutes with the evaluation at $t=1$. Hence $D_S(t)=-2t\,P'(t)/P(t)\to2t$, which equals $2$ at $t=1$.
\end{proof}

Concentration and the probe value compose into a strictly kinematic statement; the curse of dimensionality alone returns the diffusive two.

\begin{corollary}\label{cor:value}
On the Fubini--Study geometry of pure states, the probe value is forced by concentration alone. As the dimension grows, the population $\mathrm{CV}^2(n)\to0$ by Proposition~\ref{prop:cv2}; the empirical-collapse Lemma~\ref{lem:empirical} carries this to an actual sample, whose pairwise distances equalize so that the thresholded metric graph is the complete graph $K_m$ with probability tending to one, the deviating fraction bounded deterministically by Theorem~\ref{thm:dust}; and its running spectral dimension reads $\lim_{m\to\infty}D_S(K_m,1)=2$ by Theorem~\ref{thm:two}. No field equation enters the chain.
\end{corollary}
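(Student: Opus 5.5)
The corollary is a composition, so I will make precise the order of limits and then chain the three earlier results. Two parameters are involved: the Hilbert-space dimension $n$ and the sample size $m$. Lemma~\ref{lem:empirical} holds for fixed $m$ as $n\to\infty$, while Theorem~\ref{thm:two} is a statement as $m\to\infty$. The natural formulation is therefore an iterated limit. First fix $m$, send $n\to\infty$, and show that the thresholded graph $G_{m,n}$ equals $K_m$ with probability tending to one. Then observe that on this event the probe value is exactly $D_S(K_m,1)$, and let $m\to\infty$.

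For the first step I fix $m\ge2$ and $\delta>0$. Lemma~\ref{lem:empirical} gives
\[
\mathbb{P}(G_{m,n}=K_m)\ \ge\ 1-\tbinom m2\cos^{2n}\varepsilon_\delta .
\]
Since $\varepsilon_\delta\in(0,\pi/2)$, we have $\cos\varepsilon_\delta<1$, so the bound tends to one geometrically in $n$. Proposition~\ref{prop:cv2} is cited only as the population-level statement $\mathrm{CV}^2(n)\to0$. Theorem~\ref{thm:dust} is cited for the deterministic bound on the deviating fraction, $\widehat{\mathrm{CV}}^2/\delta^2$, which tends to zero in probability because the lemma gives $\widehat{\mathrm{CV}}^2\to0$. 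Neither of these is needed for the probability estimate itself, but both fit into the chain as the corollary states.

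For the second step I note that $D_S$ depends only on the graph. So on the event $\{G_{m,n}=K_m\}$ the random variable $D_S(G_{m,n},1)$ equals the deterministic number $D_S(K_m,1)$. Hence $D_S(G_{m,n},1)\to D_S(K_m,1)$ in probability as $n\to\infty$, and by Theorem~\ref{thm:two} this number tends to $2$ as $m\to\infty$. If a joint limit is wanted, I take $m=m(n)$ with $\binom{m}{2}\cos^{2n}\varepsilon_\delta\to0$. Any $m(n)=o(\cos^{-n}\varepsilon_\delta)$ with $m(n)\to\infty$ works, for example polynomial growth. Then $\mathbb{P}(G_{m(n),n}=K_{m(n)})\to1$, and $D_S(K_{m(n)},1)\to2$ deterministically, so $D_S(G_{m(n),n},1)\to2$ in probability.

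The main obstacle is the coupling of the two limits. The union bound in Lemma~\ref{lem:empirical} degrades like $m^2$, so the sample may not grow faster than exponentially in $n$. I would handle this by stating the admissible growth rate explicitly, as above, rather than treating the limits as independent. The remaining claim, that no field equation enters, is verified by inspection: every input is the Haar measure, the Beta law of overlaps, and the spectrum of $K_m$.
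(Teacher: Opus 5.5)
Your proposal is correct and follows essentially the same route as the paper's proof: an iterated limit in which Lemma~\ref{lem:empirical} gives the thresholded graph equal to $K_m$ with probability at least $1-\binom m2\cos^{2n}\varepsilon_\delta$ for fixed $m$ as $n\to\infty$, after which Theorem~\ref{thm:two} gives $D_S(K_m,1)\to2$ as $m\to\infty$, with Theorem~\ref{thm:dust} entering only as the separate deterministic statement. Your diagonal version with $m(n)\to\infty$ and $m(n)=o(\cos^{-n}\varepsilon_\delta)$ is a correct strengthening that the paper does not state, since the paper only fixes the order of the limits.
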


\begin{proof}
The composition is an iterated limit. For each fixed $m$, Lemma~\ref{lem:empirical} makes the thresholded graph of an $n$-sample equal to $K_m$ with probability at least $1-\binom m2\cos^{2n}\varepsilon_\delta$, tending to one as $n\to\infty$. That graph carries exactly the spectrum of $K_m$, and Theorem~\ref{thm:two} gives $\lim_{m\to\infty}D_S(K_m,1)=2$. The order is fixed, $n\to\infty$ first to reach the dust at each $m$ and then $m\to\infty$ for the probe value, and the only inputs are the geometry of the state space and the definition of the probe. Theorem~\ref{thm:dust} stands separately, as the deterministic statement that a vanishing empirical $\widehat{\mathrm{CV}}^2$ forces $K_m$.
\end{proof}

The two is not a dimension of the dust but a reading of the probe, forced at one scale and convention-free.

\begin{theorem}\label{thm:relax}
For the single-cluster spectrum $\operatorname{spec}=\{0\}\cup\{a\}^{(m-1)}$ with $a>0$, the running spectral dimension at one relaxation time $t=1/a$ is
\begin{equation}\label{eq:relax}
D_S\!\left(\tfrac1a\right)=\frac{2(m-1)\,e^{-1}}{1+(m-1)\,e^{-1}},
\end{equation}
in which $a$ has cancelled; hence $D_S(1/a)\to2$ as $m\to\infty$ for every $a$. The invariance is to the scale of the single eigenvalue, the clock being fixed at the relaxation time $t=1/a$; the value is then the same whether the Laplacian is normalized $(a=m/(m-1),\ t\to1)$ or combinatorial $(a=m,\ t=1/m)$. In the active regime $D_S(t)=2at$, so at the crossing $dD_S/d\ln t=D_S=2$, whereas a scale-invariant geometry holds $D_S$ flat there, slope zero.
\end{theorem}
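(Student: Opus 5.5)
The plan is to compute the heat trace of the single-cluster spectrum in closed form, as was done for $K_m$ in Theorem~\ref{thm:two}, and then evaluate everything at the relaxation clock. With eigenvalue $0$ once and $a$ with multiplicity $m-1$,
\[
P(t)=\tfrac1m\big(1+(m-1)e^{-at}\big),\qquad P'(t)=-\tfrac{(m-1)a}{m}\,e^{-at},
\]
so that
\[
D_S(t)=-2t\,\frac{P'(t)}{P(t)}=\frac{2at\,(m-1)e^{-at}}{1+(m-1)e^{-at}} .
\]
Both the heat trace and the running dimension depend on $t$ only through the product $at$. Substituting $t=1/a$ sets $at=1$, which gives~\eqref{eq:relax} at once; in that expression $a$ no longer appears. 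Dividing the numerator and denominator by $(m-1)e^{-1}$ gives $2/\big(1+e/(m-1)\big)$, and this tends to $2$ as $m\to\infty$ for every $a>0$.

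The normalization claim follows directly from the dependence on $at$ alone. The normalized Laplacian of $K_m$ has $a=m/(m-1)$ by~\eqref{eq:specKm}. The combinatorial Laplacian $D-A=(m-1)I-(J-I)=mI-J$ has eigenvalue $0$ on constants and $m$ on their orthogonal complement, so $a=m$. Each choice puts the clock at its own $t=1/a$: the normalized clock $t=(m-1)/m$ tends to $1$, and the combinatorial clock is $t=1/m$. At that clock both return the same value~\eqref{eq:relax}. I will state explicitly that the invariance is to rescaling of the single eigenvalue at fixed $at$. It is not invariance at a fixed $t$, where the two conventions differ.

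For the active-regime statement I will take the regime to be the large-$m$ limit at fixed $at$, in which $(m-1)e^{-at}/\big(1+(m-1)e^{-at}\big)\to1$ and hence $D_S(t)\to 2at$. This is where I expect the only real care to be needed. The finite-$m$ formula is not exactly linear, so "$D_S=2at$" must be read as that limit, with the convergence holding uniformly on compact $at$-intervals as in Theorem~\ref{thm:two}. In the limit, $dD_S/d\ln t=t\,\frac{d}{dt}(2at)=2at=D_S$, which equals $2$ at the crossing $at=1$.

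By contrast, a scale-invariant geometry has $P(t)\propto t^{-d/2}$. There $D_S\equiv d$ is constant and $dD_S/d\ln t=0$. I will add this short computation to support the slope-zero comparison.
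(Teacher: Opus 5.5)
Your proposal is correct and follows the paper's proof essentially step for step: the closed-form heat trace, the substitution $t=1/a$ so that $a$ cancels, the $m\to\infty$ limit, the slope $dD_S/d\ln t=2at$ in the active regime, and the contrast with $P(t)\propto t^{-d/2}$. Your reading of the active regime as the large-$m$ limit at fixed $at$ is a slightly more careful version of the paper's ``$(m-1)e^{-ta}\gg1$'' approximation; the slope claim needs $dD_S/d\ln t$, and not only $D_S$, to converge in that limit, which it does uniformly on compact $at$-intervals. Your check that $D-A=mI-J$ has $a=m$ also makes explicit the normalization comparison that the paper leaves implicit.
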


\begin{proof}
The heat trace and its derivative are $P(t)=\tfrac1m\big(1+(m-1)e^{-ta}\big)$ and $P'(t)=-\tfrac1m(m-1)a\,e^{-ta}$. At $t=1/a$ the exponent is $-1$, so $P=\tfrac1m(1+(m-1)e^{-1})$ and $P'=-\tfrac1m(m-1)a\,e^{-1}$, whence $D_S=-2\tfrac1a\,P'/P=2(m-1)e^{-1}/(1+(m-1)e^{-1})$ with $a$ cancelled, and the right side $\to2$ as $m\to\infty$. While $(m-1)e^{-ta}\gg1$ the denominator of $D_S(t)=2ta(m-1)e^{-ta}/(1+(m-1)e^{-ta})$ is carried by its second term and $D_S(t)=2at$, so $dD_S/d\ln t=t\,dD_S/dt=2at=D_S$. For a power-law density of small eigenvalues $\sim\lambda^{d/2-1}$ one has instead $P(t)\sim t^{-d/2}$ and $D_S\equiv d$, slope zero.
\end{proof}

With the metric concentrated away the diffusion has no spatial decay to report and returns only the exponent its own definition is built on; the $-2$ that inverts the $t^{1/2}$ law of diffusion, the same exponent that fixes a Brownian path at Hausdorff dimension two~\cite{Taylor1953, MortersPeres2010}. The probe can therefore be interpreted as reading its own clock with the observed value embedded in the scale of the dust.

That the number is two, and not some other value, is overdetermined rather than deep. The same two appears in Liouville quantum gravity, where a Weyl law puts the spectral dimension at two~\cite{BerestyckiWong2024, RhodesVargas2014} though the Hausdorff dimension exceeds it, and in the dispersion relation $D_S=1+d/z$ tuned to $z=d$~\cite{SotiriouVisserWeinfurtner2011}. These are independent sightings of a generic value, not confirmations of a single mechanism; the surface that is itself the Brownian map guarantees the two by definition~\cite{RhodesVargas2014, MillerSheffield2019}, and the discrete maps that converge to it carry the same two~\cite{GwynneMiller2021}.

\section{The separation criterion}

Not every discrete structure dissolves to dust that the probe reads as two; which ones do is decided by a property of the map between a structure and its continuum reading; the shape of the eigenvalue density near zero fixes what the probe reads, in three archetypes. A power-law tail of small eigenvalues is read as a flat plateau at the structure's genuine dimension $d$ (Theorem~\ref{thm:transfer}); a single relaxation scale above a gap is read as one crossing of two at its own clock, with no plateau (Theorem~\ref{thm:relax}); and a gapped two-scale band settles off the universal line at a value other than two, fixed by the ratio of its two scales (Theorem~\ref{thm:band}). Call a map \emph{spectrally faithful} when the two Laplacian spectra are comparable,
\begin{equation}\label{eq:sf}
c\,\lambda_k(G) \le \lambda_k(M) \le C\,\lambda_k(G) \quad \text{for all } k,
\end{equation}
for constants $0<c\le C$. The probe's value is set by where the spectrum sits, and the criterion measures whether the structure keeps it there. The condition is not vacuous; it genuinely excludes some pairs.

\begin{proposition}\label{prop:nonvac}
There exist nonnegative spectra $0=\mu_0\le\mu_1\le\cdots$ and $0=\nu_0\le\nu_1\le\cdots$, each with a single zero mode, admitting no constants $0<c\le C$ with $c\,\mu_k\le\nu_k\le C\,\mu_k$ for all $k$. Spectral faithfulness~\eqref{eq:sf} is therefore a nontrivial restriction on the emergence map.
\end{proposition}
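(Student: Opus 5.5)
The plan is an explicit counterexample: two sequences whose ratio $\nu_k/\mu_k$ is unbounded, so no constant $C$ can exist. Take $\mu_0=\nu_0=0$ and, for $k\ge1$, $\mu_k=k$ and $\nu_k=k^2$. Both are nonnegative and nondecreasing, and each has exactly one zero mode because $\mu_k,\nu_k\ge1$ for $k\ge1$. At $k=0$ the inequalities $c\mu_0\le\nu_0\le C\mu_0$ read $0\le0\le0$ and impose nothing, so the $k=0$ index cannot be the obstruction. The argument must therefore come from the tail, which is why both spectra are kept infinite.

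Suppose constants $0<c\le C$ existed. Then for every $k\ge1$ we would need $k^2=\nu_k\le C\mu_k=Ck$, that is, $k\le C$. Choosing any integer $k>C$ gives a contradiction, so no such pair of constants exists. By symmetry one could instead defeat the lower constant $c$ by swapping the two spectra. I would note this but not need it.

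The only point needing care is to make sure the counterexample is legitimate as Laplacian-type spectra, rather than just abstract sequences. The statement only asks for nonnegative sequences with a single zero, which the construction satisfies. If one also wants realizability, I would remark that $k$ and $k^2$ are the growth orders of Weyl laws in dimensions $2$ and $1$. For example, the Laplacian on the circle has eigenvalues $\sim k^2$, while the two-dimensional torus has $\lambda_k\sim k$ by Weyl's law. This ties the example to the paper's theme that the density of eigenvalues near zero, or equivalently their growth, encodes dimension.

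A finite-spectrum variant also works if one insists on finite graphs. Fix one spectrum as $\{0\}\cup\{1\}^{(m-1)}$, from $K_m$ rescaled via Theorem~\ref{thm:two}. Take the other as $\{0,\varepsilon,1,\dots\}$ with $\varepsilon\to0$ along a family, so that no uniform $c$ survives. That version needs a family rather than a single pair, however, so the infinite sequences above are the clean proof. With any fixed pair of finite spectra each having a single zero, constants always exist, since all nonzero ratios are finite. That observation is the main subtlety, and it is the reason for choosing infinite spectra.
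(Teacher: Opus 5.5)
Your witness is correct. With $\mu_k=k$ and $\nu_k=k^2$, both spectra are nondecreasing with a single zero mode, the $k=0$ comparison is vacuous, and $\nu_k\le C\mu_k$ fails for any integer $k>C$.

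The paper's proof runs the other way. It pairs $\mu_k=1$ (a gap) with $\nu_k=1/k$ (eigenvalues accumulating at $0$), so $\nu_k/\mu_k\to0$ and the lower constant $c$ is the one that cannot exist. That choice is meant to display the ``gap against a tail'' mismatch the criterion is built to detect. Yours is instead a mismatch of growth rates at large $k$, the Weyl-law behaviour $\lambda_k\sim k^{2/d}$ you correctly identify.

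What your route buys is that it respects the ordering the statement imposes. The sequence $\nu_k=1/k$ is decreasing, so the paper's witness violates $\nu_1\le\nu_2\le\cdots$. In fact no nondecreasing sequence with $\nu_1>0$ can accumulate at $0$. For a single pair of infinite ordered spectra, failure of \eqref{eq:sf} must therefore come from divergent growth as $k\to\infty$, as in your example (or its swap, which defeats $c$). The paper's gap-versus-tail picture can only be realized along a family of spectra, which is exactly your finite variant with $\varepsilon\to0$.

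Your remark that a fixed pair of finite spectra of equal length, each with a single zero, always admits constants is also right. It explains why an infinite or family-wise witness is unavoidable.
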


\begin{proof}
Take $\mu_k=1$ for all $k\ge1$ and $\nu_k=1/k$ for all $k\ge1$. Then $\nu_k/\mu_k=1/k\to0$, so the lower inequality $c\,\mu_k\le\nu_k$, i.e.\ $c\le1/k$, fails for every $k>1/c$; no positive $c$ works. The witness is a gap against a tail---$\mu$ has its nonzero eigenvalues bounded away from $0$, $\nu$ accumulates at $0$---the kind of mismatch the criterion is built to detect.
\end{proof}

\begin{theorem}\label{thm:gap}
Suppose a family has a single zero eigenvalue and its remaining eigenvalues concentrate at $a>0$, that is $\lambda_i\to a$ for $i\ge1$. Then
\begin{equation}\label{eq:gap}
D_S(\cdot,1)\ \longrightarrow\ 2a,\qquad\text{so}\qquad D_S\to2\ \Longleftrightarrow\ a\to1,
\end{equation}
and a family whose nonzero spectrum stays bounded below $1$ is read $D_S<2$. The value is also stable; for fixed $m$ and $t>0$, if nonnegative spectra $\lambda^{(k)}$ satisfy $\max_i|\lambda^{(k)}_i-\lambda_i(K_m)|\le\varepsilon_k\to0$, then $D_S(\lambda^{(k)},t)\to D_S(K_m,t)$.
\end{theorem}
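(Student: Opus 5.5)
The plan is to write the running dimension at $t=1$ as a weighted average of the nonzero eigenvalues, read off the limit $2a$, and obtain the stability clause from continuity of the heat trace. For a spectrum $0=\lambda_0\le\lambda_1\le\cdots\le\lambda_{m-1}$ I will use
\[
P(t)=\tfrac1m\Big(1+\sum_{i\ge1}e^{-t\lambda_i}\Big),\qquad P'(t)=-\tfrac1m\sum_{i\ge1}\lambda_i e^{-t\lambda_i}.
\]
Definition~\eqref{eq:ds} then gives
\[
D_S(t)=2t\,\frac{\sum_{i\ge1}\lambda_i e^{-t\lambda_i}}{1+\sum_{i\ge1}e^{-t\lambda_i}}.
\]
At $t=1$ this is $2\sum_i w_i\lambda_i$, with weights $w_i=e^{-\lambda_i}/(1+\sum_j e^{-\lambda_j})$. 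The total weight $W=\sum_i w_i$ lies in $(0,1)$.

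I will read the hypothesis $\lambda_i\to a$ uniformly, $\eta_m:=\max_{i\ge1}|\lambda_i-a|\to0$ as $m\to\infty$. This is the reading under which the statement holds, and the case $K_m$ of Theorem~\ref{thm:two} satisfies it. Under it, $\big|\sum_i w_i\lambda_i-aW\big|\le\eta_m W\le\eta_m$. Moreover $\sum_j e^{-\lambda_j}\ge(m-1)e^{-a-\eta_m}\to\infty$, so $1-W=1/(1+\sum_je^{-\lambda_j})\to0$. Together these give $D_S(\cdot,1)\to2a$.

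The equivalence $D_S\to2\Leftrightarrow a\to1$ is then immediate, since $2a=2$ exactly when $a=1$. Because $e^{-a}$ cancels between numerator and denominator, no fine control of the exponentials is needed; this is the same cancellation as in Theorem~\ref{thm:relax}. For the strict bound, suppose $\lambda_i\le b<1$ for all $i\ge1$. Since each $\lambda_i$ lies between $0$ and $b$, the weighted average satisfies $\sum_i w_i\lambda_i\le bW<b$. Hence $D_S(\cdot,1)<2b<2$ for every $m$, not only in the limit.

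For stability, fix $m$ and $t>0$. $P$ and $P'$ are finite sums of functions $\lambda\mapsto e^{-t\lambda}$ and $\lambda\mapsto\lambda e^{-t\lambda}$, continuous in the spectrum vector. The hypothesis $\max_i|\lambda_i^{(k)}-\lambda_i(K_m)|\le\varepsilon_k\to0$ gives $P(\lambda^{(k)},t)\to P(K_m,t)$ and $P'(\lambda^{(k)},t)\to P'(K_m,t)$. The limit $P(K_m,t)$ is strictly positive, so the quotient $-2tP'/P$ converges to $D_S(K_m,t)$. If an explicit rate is wanted, the bound $|e^{-t\lambda}-e^{-t\mu}|\le t|\lambda-\mu|$ for $\lambda,\mu\ge0$ gives one of order $t\,\varepsilon_k$.

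I expect the main obstacle to be interpretive rather than computational. Pointwise convergence $\lambda_i\to a$ for each fixed $i$ does not give the limit $2a$ as $m\to\infty$, because a growing number of stray eigenvalues could carry much of the weight. The proof must therefore fix the uniform reading, or at least convergence of the empirical distribution of the nonzero eigenvalues to $\delta_a$, which also suffices because the integrands are bounded and continuous on compacts. I would state this reading explicitly at the start of the proof.
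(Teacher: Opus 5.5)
Your proof is correct and is essentially the paper's: the limit $2a$ comes from evaluating the ratio of $\sum_{i\ge1}\lambda_i e^{-\lambda_i}$ to $1+\sum_{i\ge1}e^{-\lambda_i}$ under concentration, and stability comes from continuity of $P,P'$ in the spectrum with $P(K_m,t)>0$; your weighted-average form also makes explicit the uniform (or empirical-distribution) reading of $\lambda_i\to a$ that the paper's $o(1)$ assumes silently. It improves on the paper's sketch in two small ways: the bound $D_S(\cdot,1)\le 2bW<2b$ gives the strict reading below two at every finite $m$ with no concentration hypothesis, where the paper only passes to the limit $2a<2$, and your stability target $D_S(K_m,t)$ is the correct one, whereas the paper compares with $2$ even though $D_S(K_m,1)<2$ at each fixed $m$.
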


\begin{proof}
With one zero eigenvalue and the rest concentrating at $a$, $P(t)=\tfrac1m\big(1+(m-1)e^{-ta}+o(1)\big)\to e^{-ta}$ and $P'(t)\to-a\,e^{-ta}$. Hence $D_S(t)=-2t\,P'(t)/P(t)\to2ta$, which at $t=1$ is $2a$. The complete graph is the case $a\to1$ of Theorem~\ref{thm:two}; for a fixed gap $a<1$ the reading is $2a<2$. Stability follows from the same expansion; a spectrum within $\varepsilon$ of the complete graph's perturbs $P(1)$ and $P'(1)$ by $O(\varepsilon)$, so $|D_S(\cdot,1)-2|=O(\varepsilon)$.
\end{proof}

The single cluster's reading cancels its eigenvalue; a band carrying two scales does not.

\begin{theorem}\label{thm:band}
For a gapped two-scale band $\operatorname{spec}=\{0\}\cup\{a\}^{(p)}\cup\{b\}^{(q)}$ with $0<a\ne b$, the running spectral dimension at the first scale's relaxation time $t=1/a$ is, writing $\rho=b/a$,
\begin{equation}\label{eq:band}
D_S\!\left(\tfrac1a\right)=\frac{2\big(p\,e^{-1}+q\,\rho\,e^{-\rho}\big)}{1+p\,e^{-1}+q\,e^{-\rho}},
\end{equation}
in which the eigenvalue does \emph{not} cancel---the ratio $\rho$ survives. In the macroscopic-band limit $p=q\to\infty$ the value tends to $L(\rho)=2(e^{-1}+\rho\,e^{-\rho})/(e^{-1}+e^{-\rho})$, with
\begin{equation}\label{eq:bandoffset}
L(\rho)-2=\frac{2\,e^{-\rho}(\rho-1)}{e^{-1}+e^{-\rho}},
\end{equation}
so $L(\rho)=2$ if and only if $\rho=1$. A band whose two scales are genuinely distinct is read off the universal line of Theorem~\ref{thm:relax}, by an amount fixed by the spectral ratio.
\end{theorem}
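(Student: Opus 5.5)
The plan is to follow the computation in the proof of Theorem~\ref{thm:relax} directly, now with two exponential terms in place of one. With $m=1+p+q$ the heat trace and its derivative are
\[
P(t)=\tfrac1m\big(1+p\,e^{-ta}+q\,e^{-tb}\big),\qquad P'(t)=-\tfrac1m\big(p\,a\,e^{-ta}+q\,b\,e^{-tb}\big).
\]
I will evaluate both at $t=1/a$, where the exponents become $-1$ and $-b/a=-\rho$. The factor $\tfrac1m$ cancels in the ratio $P'/P$, and the prefactor $-2t=-2/a$ cancels the $a$ multiplying the first term. The second term then carries $b/a=\rho$. This gives \eqref{eq:band} in one line.

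The point to stress is that, unlike in Theorem~\ref{thm:relax}, $a$ does not drop out entirely. After the clock is fixed at $1/a$, the second scale enters through $\rho$ both in the coefficient $q\rho$ and in the exponent $e^{-\rho}$. So $D_S(1/a)$ is a nonconstant function of $\rho$, which I will confirm by the offset computation below.

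For the macroscopic-band limit I set $p=q$ and divide the numerator and denominator of \eqref{eq:band} by $p$. The numerator becomes $2(e^{-1}+\rho e^{-\rho})$ and the denominator becomes $1/p+e^{-1}+e^{-\rho}$. The denominator tends to $e^{-1}+e^{-\rho}>0$, so the quotient converges to $L(\rho)$ for each fixed $\rho>0$.

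Next I subtract $2$, writing it as $2(e^{-1}+e^{-\rho})/(e^{-1}+e^{-\rho})$. The $e^{-1}$ terms cancel in the numerator, leaving $2(\rho e^{-\rho}-e^{-\rho})=2e^{-\rho}(\rho-1)$. This is \eqref{eq:bandoffset}.

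The equivalence $L(\rho)=2\iff\rho=1$ then follows because $e^{-\rho}>0$ and the denominator is positive. Since $a\ne b$ means $\rho\ne1$, a genuinely two-scale band is read strictly off $2$. It is read above $2$ when $b>a$ and below $2$ when $b<a$.

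There is no real obstacle; the argument is bookkeeping. The one step to handle with care is the limit, which should be taken with $\rho$ held fixed. The limit is not uniform as $\rho\to\infty$: there $e^{-\rho}$ becomes comparable to $1/p$ and the $1/p$ term in the denominator can no longer be dropped. I will therefore state the convergence pointwise in $\rho$. Uniform convergence holds on compact $\rho$-sets by the same positivity of the denominator.
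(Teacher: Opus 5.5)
Your proposal is correct and is essentially the paper's proof: you evaluate the two-exponential heat trace at $t=1/a$, let the prefactor $1/a$ turn $qb\,e^{-\rho}$ into $q\rho\,e^{-\rho}$, drop the constant in the denominator as $p=q\to\infty$, and subtract $2$ over the common denominator, with $e^{-\rho}>0$ giving the equivalence. One aside is mistaken, though harmlessly so: the convergence is not non-uniform as $\rho\to\infty$, because the $e^{-1}$ term keeps the denominator at least $e^{-1}$ and $\rho e^{-\rho}\le e^{-1}$, so $0\le L(\rho)-D_S(1/a)\le 4e/p$ uniformly in $\rho>0$.
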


\begin{proof}
The heat trace and its derivative are $P(t)=\tfrac1{p+q+1}(1+p\,e^{-ta}+q\,e^{-tb})$ and $P'(t)=-\tfrac1{p+q+1}(pa\,e^{-ta}+qb\,e^{-tb})$. At $t=1/a$ the exponents are $-1$ and $-\rho$, and $D_S=-2\tfrac1a P'/P$ gives~\eqref{eq:band}; the factor $1/a$ carries $pa\,e^{-1}+qb\,e^{-\rho}$ to $p\,e^{-1}+q\rho\,e^{-\rho}$, in which $a$ enters only through $\rho=b/a$. With $p=q\to\infty$ the constant in the denominator is negligible and the ratio tends to $L(\rho)$; subtracting $2$ over the common denominator gives~\eqref{eq:bandoffset}, which vanishes exactly at $\rho=1$ because $e^{-\rho}>0$.
\end{proof}

Faithfulness was defined as the comparison~\eqref{eq:sf}; the next theorem shows what it carries and what it does not. Two statements separate, and only the first follows from the comparison alone.

\begin{theorem}\label{thm:transfer}
Let $G$ and $M$ have nonnegative Laplacian spectra related by spectral faithfulness~\eqref{eq:sf}, $c\,\lambda_k(G)\le\lambda_k(M)\le C\,\lambda_k(G)$ for all $k$ with $0<c\le C$, and write $N_G,N_M$ for their integrated densities of nonzero eigenvalues and $\Theta(t)=\sum_{k\ge1}e^{-t\lambda_k}=\int_{0^+}^\infty e^{-t\lambda}\,dN(\lambda)$ for the spectral heat trace.

\emph{Exponent, from the comparison alone.} The densities and heat traces are comparable,
\begin{equation}\label{eq:transfer}
N_G(\lambda/C)\ \le\ N_M(\lambda)\ \le\ N_G(\lambda/c),\qquad
\Theta_G(Ct)\ \le\ \Theta_M(t)\ \le\ \Theta_G(ct),
\end{equation}
so if either side has integrated-density index $d/2$, meaning $\log N(\lambda)/\log\lambda\to d/2$ as $\lambda\to0^+$, the other does too and both carry the common scaling exponent
\begin{equation}\label{eq:transfer-exponent}
\lim_{t\to\infty}\frac{\log\Theta(t)}{\log t}\ =\ -\frac{d}{2}.
\end{equation}
Nothing beyond~\eqref{eq:sf} enters, and the comparison fixes only this exponent; it does not pin the pointwise running $D_S(t)$, since a two-sided bound on $\Theta$ is not a bound on $\Theta'$.

\emph{Running plateau, only under regular variation.} Suppose in addition that the integrated density of a given side is regularly varying at $0^+$ of index $d/2$, $N(\lambda)=\lambda^{d/2}\ell(\lambda)$ with $\ell$ slowly varying (the pure power law $N(\lambda)\sim A\,\lambda^{d/2}$ being the case $\ell\equiv A$). Then on that side the running spectral dimension converges pointwise,
\begin{equation}\label{eq:transfer-value}
D_S(t)\ =\ -2t\,\frac{\Theta'(t)}{\Theta(t)}\ \longrightarrow\ d\qquad(t\to\infty).
\end{equation}
Regular variation is a property of the density, not a consequence of~\eqref{eq:sf}; the comparison carries the exponent~\eqref{eq:transfer-exponent} to the other side but not the regular variation, hence not the running plateau~\eqref{eq:transfer-value}. The constants $c,C$ enter only the prefactor, displacing where the plateau sits in probe time by a factor in $[1/C,1/c]$, never the value $d$.
\end{theorem}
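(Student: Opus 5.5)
The proof splits along the statement's two halves, and the comparison part is pure bookkeeping. For the densities, note that $N_M(\lambda)=\#\{k\ge1:\lambda_k(M)\le\lambda\}$. Since $\lambda_k(M)\le C\lambda_k(G)$, any $k$ with $\lambda_k(G)\le\lambda/C$ satisfies $\lambda_k(M)\le\lambda$, which gives $N_G(\lambda/C)\le N_M(\lambda)$. Symmetrically, $\lambda_k(M)\le\lambda$ forces $\lambda_k(G)\le\lambda/c$, which gives the upper bound. Both steps use the common indexing $k$ of the two spectra in \eqref{eq:sf}, so the zero modes must be matched and excluded on both sides.

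For the heat traces we compare termwise: $e^{-tC\lambda_k(G)}\le e^{-t\lambda_k(M)}\le e^{-tc\lambda_k(G)}$, and summing over $k\ge1$ gives \eqref{eq:transfer}. This requires $\Theta$ to be finite, which holds for finite spectra, or for discrete spectra with $\Theta_G(ct)<\infty$, which we assume.

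For the exponent \eqref{eq:transfer-exponent}, the density bounds give $\log N_M(\lambda)/\log\lambda$ squeezed between $\log N_G(\lambda/C)/\log\lambda$ and $\log N_G(\lambda/c)/\log\lambda$. Both outer ratios tend to $d/2$, because $\log(\lambda/C)/\log\lambda\to1$. So the index transfers from one side to the other. It remains to turn the density index into the heat-trace exponent on a single side. For the upper bound, take $\epsilon>0$ with $N(\lambda)\le\lambda^{d/2-\epsilon}$ near $0$, and split
\[
\Theta(t)=\int_{0^+}^{\lambda_0}e^{-t\lambda}\,dN+\int_{\lambda_0}^\infty e^{-t\lambda}\,dN .
\]
Integrate the first piece by parts to get $t\int e^{-t\lambda}N(\lambda)\,d\lambda\lesssim t^{-d/2+\epsilon}$; the tail decays exponentially, given finite total mass or a finite trace. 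For the lower bound, use $\Theta(t)\ge e^{-1}N(1/t)\ge e^{-1}t^{-d/2-\epsilon}$. Letting $\epsilon\to0$ gives \eqref{eq:transfer-exponent}. The limit is $t\to\infty$ and it probes $\lambda\to0^+$, so the index must be read at small $\lambda$ and there must be infinitely many small eigenvalues, i.e.\ an infinite family. We can also note that $\Theta_G(ct)$ and $\Theta_G(Ct)$ have the same $\log/\log t$ limit, which gives a second route to the same transfer.

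For the running plateau we invoke Karamata's Abelian theorem for Laplace–Stieltjes transforms. If $N(\lambda)=\lambda^{d/2}\ell(\lambda)$ is regularly varying at $0^+$, then
\[
\Theta(t)\sim\Gamma(1+d/2)\,t^{-d/2}\ell(1/t)\qquad(t\to\infty).
\]
We also need $-\Theta'(t)=\int\lambda e^{-t\lambda}\,dN(\lambda)$. The measure $\lambda\,dN$ has integrated density $\int_0^\lambda s\,dN(s)$, which is regularly varying of index $d/2+1$ with asymptotic $\tfrac{d/2}{d/2+1}\lambda N(\lambda)$; this follows from integration by parts and Karamata's integral theorem. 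Applying the Abelian theorem again gives
\[
-\Theta'(t)\sim\Gamma(2+d/2)\tfrac{d/2}{d/2+1}\,t^{-d/2-1}\ell(1/t).
\]
Taking the ratio, the factors $\ell(1/t)$ cancel and $-2t\Theta'/\Theta\to2\cdot(d/2)=d$.

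This second Abelian step is the main obstacle. The point is precisely that asymptotics of $\Theta$ alone do not control $\Theta'$. Regular variation is what supplies the derivative asymptotic, via the monotone measure $\lambda\,dN$, with no Tauberian argument needed. The case $d=0$ needs separate care, or we exclude it.

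The final remarks follow from what has already been shown. A counterexample shows the comparison does not carry regular variation: a $G$ spectrum with pure power law, perturbed into $M$ by log-periodic multiplicative factors lying in $[c,C]$, keeps the index but makes $D_S$ oscillate. The plateau location is displaced, while its value is fixed at $d$. This is because $\Theta_G(Ct)\le\Theta_M(t)\le\Theta_G(ct)$ rescales time by factors in $[c,C]$, so the plateau moves by a factor in $[1/C,1/c]$ and its height is unchanged.
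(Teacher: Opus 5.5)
Your proposal is correct, and on the key step it takes a different route from the paper. For the plateau, the paper applies Karamata's theorem once to get $\Theta(t)\sim\Gamma(1+\tfrac d2)\,t^{-d/2}\ell(1/t)$. It then differentiates this asymptotic with the monotone density theorem. That theorem applies because complete monotonicity makes $-\Theta'$ nonincreasing, with $\Theta(t)=\int_t^\infty(-\Theta'(s))\,ds$. The step is Tauberian in character: a monotonicity side condition buys the derivative asymptotic.

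You stay entirely on the Abelian side. Integration by parts and Karamata's integral theorem show that the first-moment distribution $\int_0^\lambda s\,dN(s)\sim\frac{d/2}{d/2+1}\,\lambda N(\lambda)$ is regularly varying of index $d/2+1$. A second Abelian application to the measure $\lambda\,dN$, whose Laplace transform is $-\Theta'$, then gives the derivative directly.

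The paper's route is shorter, and it names the structural reason the derivative asymptotic comes for free: complete monotonicity of the heat trace. Yours is more elementary and shows the mechanism at the level of the spectral measure. It also makes the $d=0$ edge case visible as the vanishing coefficient $\frac{d/2}{d/2+1}$. In that case the little-$o$ ($c=0$) form of the Abelian theorem gives $-\Theta'(t)=o(t^{-1}\ell(1/t))$ and hence still $D_S\to0$, so nothing needs to be excluded.

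On the first half, your termwise bound $e^{-tC\lambda_k(G)}\le e^{-t\lambda_k(M)}\le e^{-tc\lambda_k(G)}$ is more direct than the paper's passage through $\Theta(t)=t\int_0^\infty e^{-t\lambda}N(\lambda)\,d\lambda$ with rescaling. Your $\varepsilon$-squeeze (integration by parts above, $\Theta(t)\ge e^{-1}N(1/t)$ below) supplies exactly what the paper compresses into ``$N(\lambda)\asymp\lambda^{d/2}$ up to a slowly varying factor''. That phrase claims more than the index condition $N(\lambda)=\lambda^{d/2+o(1)}$ literally provides. Your log-periodic perturbation is a concrete instance of the paper's abstract remark that a bounded $b(t)$ in $\log\Theta_M(t)=-\tfrac d2\log t+b(t)$ can have a non-vanishing derivative.
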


\begin{proof}
\emph{Comparison and exponent.} The eigenvalue comparison carries to the density, since $\lambda_k(M)\le\lambda$ forces $\lambda_k(G)\le\lambda/c$ for the upper bound, while $\lambda_k(G)\le\lambda/C$ forces $\lambda_k(M)\le C\,\lambda_k(G)\le\lambda$ for the lower. It passes to the heat trace through $\Theta(t)=t\int_0^\infty e^{-t\lambda}N(\lambda)\,d\lambda$ (integration by parts on $\Theta(t)=\int_{0^+}^\infty e^{-t\lambda}\,dN(\lambda)$, the boundary terms vanishing since $N(0^+)=0$ off the zero mode and $e^{-t\lambda}N(\lambda)\to0$); substituting the density comparison and rescaling $\lambda\mapsto C\lambda$ in the lower flank, $\lambda\mapsto c\lambda$ in the upper, gives $\Theta_G(Ct)\le\Theta_M(t)\le\Theta_G(ct)$, the second half of~\eqref{eq:transfer}. If one side has index $d/2$ then $N(\lambda)\asymp\lambda^{d/2}$ up to a slowly varying factor, the comparison hands the same two-sided bound to the other, and $\log\Theta(t)/\log t\to-d/2$ as $t\to\infty$ on both, equation~\eqref{eq:transfer-exponent}. This is all the comparison yields. Writing $\log\Theta_M(t)=-\tfrac{d}{2}\log t+b(t)$ with $b$ bounded by the comparison, the running quantity is $D_S^M(t)=d-2\,db/d\log t$, and a bounded $b$ may have a non-vanishing derivative, so~\eqref{eq:transfer-exponent} does not give~\eqref{eq:transfer-value}; the band on $\Theta$ leaves $\Theta'$ free.

\emph{Running plateau under regular variation.} Let the density of the side in question be regularly varying at $0^+$ of index $d/2$. By Karamata's Tauberian theorem for the Laplace--Stieltjes transform of a nondecreasing $N$~\cite[\S1.7.1]{BinghamGoldieTeugels1987}, $\Theta(t)\sim A\,\Gamma(1+\tfrac{d}{2})\,t^{-d/2}$ as $t\to\infty$, the slowly varying factor carried along. The heat trace is completely monotone, $(-1)^n\Theta^{(n)}(t)=\int\lambda^n e^{-t\lambda}\,dN(\lambda)\ge0$, so $-\Theta'$ is nonnegative and nonincreasing and $\Theta(t)=\int_t^\infty\big(-\Theta'(s)\big)\,ds$ is the tail integral of this monotone density. The monotone-density theorem~\cite[\S1.7.2]{BinghamGoldieTeugels1987} then differentiates the asymptotic, $-\Theta'(t)\sim\tfrac{d}{2}\,A\,\Gamma(1+\tfrac{d}{2})\,t^{-d/2-1}$, whence
\[
D_S(t)=-2t\,\frac{\Theta'(t)}{\Theta(t)}\ \longrightarrow\ -2t\cdot\Big(-\frac{d}{2}\cdot\frac{1}{t}\Big)=d ,
\]
equation~\eqref{eq:transfer-value}. The monotone-density step is applied to $-\Theta'$, the genuine density, not to the limit being sought. The comparison does not make $N$ on the other side regularly varying, controlling it only between two dilations of a regularly varying function, which need not itself be regularly varying; so~\eqref{eq:transfer-value} holds on whichever side carries regular variation, and the exponent~\eqref{eq:transfer-exponent}, not the running plateau, is what crosses the comparison.
\end{proof}

Regular variation at $0^+$ of positive index is a property of an infinite spectrum, eigenvalues accumulating at zero; a finite graph has a smallest nonzero eigenvalue and so never satisfies it literally, its $D_S(t)$ diverging once the gap dominates beyond $t\sim1/\lambda_{\min}$. Theorem~\ref{thm:transfer} therefore licenses the limiting plateau a finite graph approaches but cannot itself carry, and the plateau read at finite $N$ is the intermediate-$t$ transient over $1\ll t\ll1/\lambda_{\min}$, widening toward the limit as $\lambda_{\min}\to0$; the per-$N$ drift of Section~\ref{sec:sort} is the finite-size approach to it, hence the empirical proxy for the regular variation the theorem requires.

The hypothesis is Euclidean and local; it asks the continuum operator to be a positive Laplacian generating a heat semigroup, comparable eigenvalue by eigenvalue to the graph's. The Lorentzian causal set is where it is not guaranteed. There the physical spectral dimension is read from the smeared, nonlocal Benincasa--Dowker d'Alembertian, a different operator from the link-graph Laplacian and not known to be comparable to it---and the two operators read the same causal set differently. Under the nonlocal d'Alembertian the causal set's spectral dimension falls to $D_S=2$ in all dimensions in the deep-ultraviolet limit~\cite{BelenchiaEtAl2016}, whereas under the link-graph Laplacian its $D_S$ instead rises at small scales, both for Eichhorn and Mizera~\cite{EichhornMizera2014} and in our own reading. The same causal set thus reads two under the physical operator and rises under the graph operator---operator dependence exhibited, the cross-program reading turning on which operator probes the structure. That d'Alembertian result fixes a $D_S$ value; whether the physical operator carries the link graph's gap---a question of shape on which~\cite{BelenchiaEtAl2016} is silent---is the one Section~\ref{sec:sort} marks open. Whether any faithful relation holds for the Lorentzian operator is the open hypothesis beneath every cross-program reading; Theorem~\ref{thm:transfer} says only that where faithfulness holds the exponent carries across, and where it is not established, as here, the readings are not licensed to be identified either way.

Read at the fixed probe time $t=1$, however, the value $2a$ records a scale rather than a dimension. The same family read at its own relaxation time returns two for every $a$---$2a$ is the universal two of Theorem~\ref{thm:two} times where the clock is stopped---so pinning the clock to each relaxation time collapses the apparent spread back to two, and a single eigenvalue scale cannot encode a spectral dimension that discriminates one structure from another. The quantum-gravity values confirm this from the other side, where their spectral dimension is itself scale-dependent and running, carried in causal dynamical triangulations from four in the infrared to two in the ultraviolet~\cite{Ambjorn2005, Carlip2017}, the ultraviolet value an extrapolation consistent with but not pinned at two, with the melonic sector a branched polymer of spectral dimension $4/3$~\cite{GurauRyan2014}---limits of a running curve on a spectrum with a genuine small-eigenvalue tail, an observable the single-scale dust does not possess. Spectral faithfulness, \eqref{eq:sf}, is the condition under which an emergence map preserves the probe's reading at all. What that reading is, in turn, is fixed by the eigenvalue density near zero, and Theorem~\ref{thm:relax} makes the dependence exact. That density is the structural test that tells a geometry from the artifact; the next section runs it on the programs themselves. Where Fields and collaborators argue that a finite observer cannot \emph{operationally} distinguish one structure from another~\cite{FieldsEtAl2025Classicality}, we propose to complete the picture from the other side; the test is spectral and not operational, reading the density near zero, a full-spectrum quantity no bounded observer can operationally obtain. The distinction it draws is real and, on the side of the dust, operationally invisible exactly as they require. The density this criterion reads sits at the bottom of the spectrum, so the dimension it returns is the long-time, infrared end of the diffusion---the plateau where a genuine tail carries one, the relaxation-scale crossing where a single scale does not; the ultraviolet running to two that Carlip catalogs is the short-time, large-eigenvalue regime the small-$\lambda$ test does not reach. What survives is the diffusion reading its own scaling on any structure whose geometry has dissolved, and the question, sharper than Carlip's~\cite{Carlip2017}, of whether the ultraviolet two of quantum gravity is itself such a reading rather than a two-dimensional continuum.

\section{Sorting structures by their spectrum}\label{sec:sort}

Theorem~\ref{thm:relax} turns the eigenvalue density near zero into a test that can be run on the programs themselves. We build three canonical discrete ensembles (a $(1{+}1)$-dimensional causal dynamical triangulation, a causal set sprinkled into the Minkowski diamond $M^{1+1}$ and reduced to its link graph, and a melonic graph grown by rank-$3$ colored-tensor insertions), take the normalized graph Laplacian of each, and read the shape of its spectrum near zero. The constructions and their spectra are reproduced in full from the accompanying code, each returning $D_S(1)\approx1.62$, $1.79$, and $1.22$.

The three sort by the shape of their spectrum. On our constructions the triangulation carries a power-law tail (its smallest nonzero eigenvalue falls as $1/N$, the near-zero density fills in) and $D_S(t)$ holds a plateau, here near $2.07$ and widening toward two as $N$ grows; its $D_S(1)=1.62$ is the linear law read below that plateau, not a dimension below two. The melonic graphs tail likewise, to a plateau near $1.28$ spanning more than a decade of scale, with $D_S(1)=1.22$ below it. That plateau sits away from two, and the diffusion artifact returns two and only two; a plateau away from two therefore cannot be the probe reading its own exponent, so the melonic graph reads a dimension of its own structure, whether or not that dimension is exactly the continuum $4/3$. These finite plateaus sit near the continuum spectral dimensions the two programmes are now known rigorously to carry, $d_s=2$ for causal triangulations and $d_s=4/3$ for the branched-polymer melonic sector, both by the Durhuus--Jonsson--Wheater tree reduction~\cite{DurhuusJonssonWheater2010, GurauRyan2014, DurhuusJonssonWheater2007}. Whether our finite reading \emph{is} that continuum value is the faithfulness question~\eqref{eq:sf}, which by Theorem~\ref{thm:transfer} turns on whether the link-graph spectrum is comparable to the continuum one and whether the near-zero density is regularly varying, the hypothesis under which the theorem licenses the plateau reading rather than the bare exponent; the per-$N$ drift tests both separately for each programme. The CDT plateau drifts toward two, from $2.10$ at $N{=}256$ to $2.05$ at $N{=}2025$, consistent with $d_s=2$ but settling a few percent above it across these sizes; an observed drift, not a convergence we can claim. The melonic plateau instead holds near $1.28$ over the same range, a few percent below $4/3$ with no drift toward it, so the finite link-graph Laplacian is \emph{not shown} spectrally faithful to the continuum melonic dimension. That is a statement about the proxy and not the instrument---which reads a structure's bulk spectral dimension correctly where it is independently known, ring, torus, and the Sierpinski gasket among them---and the $1.28$ is the link graph's own number; condition~\eqref{eq:sf} we observe but do not establish for the melon.

The causal set is neither. Its near-zero density vanishes with system size, the fraction of eigenvalues below $0.1$ running $0.016\to0.002\to0$ over $N=64\to2025$, so it is gapped, not a tail, and not a geometry. Yet its nonzero spectrum is a band and not a single value, the coefficient of variation of the nonzero eigenvalues holding near $0.3$ against the dust's exact $0$, so it is not the pure artifact either. The decisive separation rescales each curve to relaxation units $\tau=t\langle\lambda\rangle$ and asks which lie on the universal line $D_S=2\tau$ of Theorem~\ref{thm:relax}. Only the dust does, chord $2.00$ at its crossing. CDT and melonic peel off into their plateaus; the causal-set link graph hugs the line but misses it, chord $1.76$ at the crossing rather than two, the chord $D_S/\tau$ taken where the rescaled curve reaches $D_S=2$---a relaxation-time quantity distinct from the fixed-clock $D_S(1)=1.79$ above, which is a $D_S$ value at $t=1$ and not a chord at the $D_S=2$ crossing---and convention-dependent, the combinatorial Laplacian giving $1.58$ where the dust's crossing is convention-free. The causal-set link graph is thus spectrally non-geometric without being the dust; gapped and multiply scaled, it is the two-scale band of Theorem~\ref{thm:band} whose ratio $\rho=b/a\approx1.7$ carries the relaxation reading off the universal line, by an amount that vanishes only at $\rho=1$~\eqref{eq:bandoffset}, the third class that completes the dichotomy of Theorem~\ref{thm:relax} into a trichotomy, occupied here by one example. Both scales are macroscopically populated, $p$ and $q$ each of order $m$, near $592{:}431$ at $N=1024$ and stable as $N$ grows, so both enter the bulk trace and the closed form~\eqref{eq:band} gives a reading $\approx2.36$ at $\rho\approx1.7$, above two, the heavier lower scale tilting it below the equal-population limit $L(\rho)=2.46$. The sub-two chord is the separate crossing-point quantity in $\tau$-coordinates. Were one scale a vanishing fraction of the spectrum, the bulk trace would read the dominant scale alone, and indeed the closed form~\eqref{eq:band} loses its $\rho$-dependence once either population is negligible; a comb is exactly that degenerate case, its bulk heat trace reading the teeth, all but an $O(1/T)$ fraction of sites, at spectral dimension $1$, while the backbone carries the uniform comb's $3/2$~\cite{DurhuusJonssonWheater2006}, the top of the $(1,3/2)$ range that ensemble spans, a non-bulk observable. The causal-set band carries both scales, so its ratio lives in the bulk reading.

This placement is stable across $N=64\to2025$, the gap persisting and the band not collapsing to a cluster, with a slow drift toward the dust whose strict $N\to\infty$ limit these sizes do not settle. The reading is of the link-graph Laplacian, not of the smeared non-local d'Alembertian from which causal-set spectral dimensions are physically computed. The sort establishes a statement about these constructions read by this probe; two carry a small-eigenvalue tail and a $D_S$ plateau, one is gapped, and only the single-scale dust lies on $2\tau$. The artifact is non-vacuous and distinct; the dust returns two with no geometry, and at least one programme-shaped structure is neither a continuum nor the dust. Whether the tails we see certify the continuum dimensions, and whether the causal set's gap survives the physical d'Alembertian, are the two faithfulness questions this opens; the instrument discriminates spectral shapes, and how many of Carlip's rows it reclassifies is for those questions to decide.

\section{The passage to spacetime}

The results above live on the space of states. Whether the quantum dust is spacetime, whether what an observer reads off a holographic screen assembles into a geometry, is the question we pose and do not close.

The bridge has a rigorous first rung. A holographic screen carries finitely many distinguished interactions, $\dim\mathcal{H}\le e^{A/4}$ being finite, and sharpening the bound refines that finite list toward a profinite, totally disconnected limit, the screen completed a condensed object in the sense of Scholze and Clausen~\cite{scholze2019condensed}. Whether such a truncated state space converges as a metric space to a classical geometry is operator-system Gromov--Hausdorff convergence, settled for the sphere by Rieffel's fuzzy-sphere theorem~\cite{Rieffel2004}, recast in the spectral-truncation framework by van Suijlekom~\cite{VanSuijlekom2021, ConnesVanSuijlekom2021, LeimbachVanSuijlekom2024}, and reaching $\mathbb{CP}^n$ through Berezin quantization on the full matrix algebras~\cite{Rieffel2023}. We conjecture, more narrowly, that the truncation cut by a Dirac spectral window reaches the same limit by a different mechanism; what stays open beyond it is faithfulness across refinement, the condition the separation criterion names.

A second open thread runs through entanglement. On a single $\mathbb{CP}^n$ the dust is a statement about pure-state geometry; monogamy of entanglement constrains a tensor factorization, a different object.

\begin{proposition}\label{prop:mono}
For any reals $w_1,\dots,w_k$ with $\sum_{i=1}^k w_i\le1$,
\begin{equation}\label{eq:mono}
\min_{1\le i\le k} w_i\ \le\ \frac1k .
\end{equation}
\end{proposition}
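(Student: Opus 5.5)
The argument will be a direct averaging (pigeonhole) one, with no appeal to the earlier results. It will rest on the observation that a minimum never exceeds an arithmetic mean. Let $w_* = \min_{1\le i\le k} w_i$.

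The first step is the termwise bound $w_*\le w_i$ for every $i$. This holds by the definition of the minimum over a finite, nonempty index set; the statement implicitly takes $k\ge1$, since otherwise $1/k$ is undefined.

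Next, summing these $k$ inequalities gives $k\,w_* \le \sum_{i=1}^k w_i$. Combining this with the hypothesis $\sum_i w_i\le 1$ yields $k\,w_*\le 1$.

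Finally, dividing by $k>0$ preserves the inequality and gives $w_*\le 1/k$, which is~\eqref{eq:mono}.

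No sign or nonnegativity assumption on the $w_i$ is needed anywhere. Negative weights only make the sum bound easier to satisfy, and the chain of inequalities never divides by a weight.

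The only step that needs any care is the last division, and it is harmless because $k$ is a positive integer. There is no genuine obstacle. I would also remark that the bound is attained at $w_i=1/k$ for all $i$, so~\eqref{eq:mono} is sharp.

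This is the form in which it is meant to be read against monogamy. If the $w_i$ are shares of a unit entanglement budget distributed over $k$ partners, then some partner receives at most the uniform share $1/k$.
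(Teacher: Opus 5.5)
Your proof is correct and is essentially the paper's argument: the paper phrases it by contradiction (if every $w_i$ exceeded $1/k$, the $k$ weights would sum to more than $1$), while you run the same averaging bound $k\min_i w_i\le\sum_i w_i\le 1$ directly. Both correctly note that no nonnegativity of the $w_i$ is needed.
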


\begin{proof}
Were every weight greater than $1/k$, the $k$ weights would sum to more than one; hence at least one is at most $1/k$. Nonnegativity is not needed; the sum bound alone forces it.
\end{proof}

With the per-pair entanglement of a vertex bounded by monogamy~\cite{CoffmanKunduWootters2000, KoashiWinter2004, OsborneVerstraete2006}, Proposition~\ref{prop:mono} says some thread of the equidistant web is thin, of strength at most $1/(m-1)\to0$. Whether Fubini--Study equidistance \emph{induces} the multipartite monogamy structure we state as a conjecture; Fields and collaborators prove an operational form of ER=EPR for the two-agent pair~\cite{FieldsMarciano2025}, and it is pursued as an algebraic proposal~\cite{EngelhardtLiu2024}, the lift from the pair to the whole web the gap that remains. Under ER=EPR the thin threads are the state-space side of the narrow bridges that reconnect the dust into a spacetime.

\section{What is proved}

The kinematic spine is exact. Proposition~\ref{prop:cv2} computes the concentration law $\mathrm{CV}^2(n)=(4-\pi)/(\pi^2 n)+O(n^{-3/2})$ with its sharp constant, and Theorem~\ref{thm:dust} turns the vanishing of $\mathrm{CV}^2$ into the collapse of the thresholded metric graph to the complete graph $K_m$. The probe value is exact in the same sense, Theorem~\ref{thm:two} giving $D_S(K_m,1)\to2$ and Corollary~\ref{cor:value} composing the chain---on the Fubini--Study geometry, concentration alone returns the diffusive two, no field equation entering. This establishes the value and its origin.

The criterion is a trichotomy, and each class is a theorem. A power-law tail of small eigenvalues reads a genuine dimension held flat as a plateau, the exponent carried to the continuum wherever the comparison holds (Theorem~\ref{thm:transfer}); a single relaxation scale returns two at its own clock with the eigenvalue cancelled, convention-free (Theorem~\ref{thm:relax}), the same single cluster reading $2a$ at a fixed clock, stable under perturbations of the spectrum (Theorem~\ref{thm:gap}); and a two-scale band reads off the universal line, by $L(\rho)=2$ exactly when $\rho=1$ (Theorem~\ref{thm:band}), closing the trichotomy. The comparison condition~\eqref{eq:sf} that selects which structures the probe reads as two is nontrivial (Proposition~\ref{prop:nonvac}); the Lorentzian, nonlocal d'Alembertian is the case where it is not established, and there the transfer stays open.

The three-way numerical sort, on finite ensembles at $N\le2025$ and on the link-graph Laplacian rather than the physical d'Alembertian, places the three structures by the diagnostic trichotomy---tail-plateau (Theorem~\ref{thm:transfer}), single-scale crossing (Theorem~\ref{thm:relax}), and two-scale band (Theorem~\ref{thm:band}). The plateau values it reads, near $2.07$ and $1.28$, are the finite Laplacian's own, the CDT plateau drifting toward two but settling a few percent above it and the melonic plateau holding near $1.28$ without drifting to $4/3$, so neither is shown spectrally faithful to its continuum value by~\eqref{eq:sf}. The instrument still reads dimensions it is given, reproducing $1$, $2$, and $2\ln3/\ln5$ on the ring, the torus, and the Sierpinski gasket to within finite-size error; the bound falls on the proxy, the probe intact. The off-plateau readings $1.62$ and $1.22$ are finite-clock values.

Whether the finite plateaus equal the continuum dimensions is the faithfulness question~\eqref{eq:sf}, which we observe and do not establish; open too are the strict large-$N$ fate of the gapped two-scale structure, the survival of its gap under the physical smeared d'Alembertian, and the passage from the dust to spacetime.

\paragraph{Formalization.}
A Lean~4 formalization with Mathlib~\cite{mathlib2020} accompanies this paper. The elementary results---the collapse to the dust (Theorem~\ref{thm:dust}), the probe value and the chain that composes to it (Theorem~\ref{thm:two} and Corollary~\ref{cor:value}), the convention-free crossing (Theorem~\ref{thm:relax}), the gap and band readings (Theorems~\ref{thm:gap} and~\ref{thm:band}), the non-vacuity of faithfulness (Proposition~\ref{prop:nonvac})---are proved with no custom axioms. The marginal law of the squared overlap, $u\sim\mathrm{Beta}(1,n)$, enters as the single cited input; from it the population--empirical bridge is machine-checked (Lemma~\ref{lem:empirical}), namely the tail $\mathbb{P}(u>x)=(1-x)^n$, the union bound over the $\binom m2$ pairs, and the deterministic collapse, so the dust of an \emph{actual sample} follows rather than the population statement alone. The exact finite-$n$ polynomial moments of the squared overlap, $E[u]=1/(n+1)$ and $E[u^2]=2/((n+1)(n+2))$ for the Beta$(1,n)$ law, are machine-checked. Theorem~\ref{thm:transfer}'s heat-trace comparison $\Theta_G(Ct)\le\Theta_M(t)\le\Theta_G(ct)$ and its exponent transfer are machine-checked, as is its plateau in the exact power-law case, where $N(\lambda)=A\lambda^{d/2}$ gives $\Theta(t)=A\,\Gamma(1+\tfrac{d}{2})\,t^{-d/2}$ and the running dimension $D_S(t)=-2t\,\Theta'(t)/\Theta(t)=d$ for every $t$. The full development is available at \texttt{K-NANOG/quantum-dust}.

\section{Discussion}

Carlip catalogs the routes by which the spectral dimension falls to two, an ultraviolet fixed point, a near-singularity decoupling, a horizon symmetry, and asks why they agree~\cite{Carlip2017}. There is a route to two that no dynamics drives; on a large state space the metric dissolves by counting, and a diffusion on the dissolved result returns two for free. But a generic dynamics-free route to two does not make every programme's two an artifact; which two is the probe reading itself and which is a genuine dimension is fixed not by the agreement but by the shape of the spectrum near zero, and on the three structures we tested in Section~\ref{sec:sort} the test discriminates---two carry a tail and read a dimension, one is gapped. We ask, of each row, whether it records a dimension of spacetime or the signature of the probe that every row employs, and we supply the test that decides.

The mechanism is old and the reading is new. Popescu, Short, and Winter built canonical typicality on the same concentration, in the language of subsystem entropy rather than Fubini--Study distance~\cite{PopescuShortWinter2006}; Cao, Carroll, and Michalakis recover a spatial geometry from the mutual-information graph of an atypical area-law state~\cite{CaoCarrollMichalakis2017}, where we characterize the generic regime in which that construction degenerates to dust, the two results meeting across the boundary of typicality. Quantum graphity postulates the complete graph as its high-temperature state~\cite{KonopkaMarkopoulouSmolin2006}; we show it is forced, and remove the postulate. Combinatorial quantum gravity builds a two-dimensional geometry up from random bits~\cite{Trugenberger2017}, reaching the same value by the opposite arrow.

What we have is a theorem about the geometry of quantum states and a probe that reads its own scaling off it, proved throughout. The two is a fact about measurement, the fingerprint a diffusion leaves on any space whose metric has dissolved. What we do not have is the passage to spacetime; the dust is what the space of states becomes under the curse of dimensionality, and whether that dust is the world is the question we leave, stated precisely, for the construction that closes it.

\end{document}